\theoremstyle{definition}
\newtheorem{thm}{Theorem}
\newtheorem{lem}{Lemma}
\newtheorem{cor}{Corollary}
\definecolor{Blue}{rgb}{0.0,0.0,1.0}
\begin{document}
%
\title{PBF: A New Privacy-Aware Billing Framework for Online Electric Vehicles with Bidirectional Auditability}
%
%
%

\author{Rasheed Hussain,
        Donghyun Kim,~\IEEEmembership{Member,~IEEE,}
        Michele Nogueira,~\IEEEmembership{Member,~IEEE,}\\
        Junggab Son,
        Alade O. Tokuta,~\IEEEmembership{Member,~IEEE,}
        Heekuck Oh,~\IEEEmembership{Member,~IEEE}
\IEEEcompsocitemizethanks{
\IEEEcompsocthanksitem R. Hussain and H. Oh are with the Department
of Computer Science and Engineering, Hanyang University, South Korea. E-mail: \{rasheed,hkoh\}@hanyang.ac.kr.\protect
\IEEEcompsocthanksitem D. Kim, J.Son, and Alade O. Tokuta are with Department of Mathematics and Physics, North Carolina Central University, Durham, NC 27707, USA. E-mail: donghyun.kim@nccu.edu.\protect
\IEEEcompsocthanksitem M. Nogueira is with Department of Informatics, Federal University of Paran$\acute{\text{a}}$, Curitiba, Brazil. E-mail: michele@inf.ufpr.br.\protect
\IEEEcompsocthanksitem H. Oh is the corresponding author. \protect
}}

\IEEEcompsoctitleabstractindextext{%
\begin{abstract}

\textbf{Recently an online electric vehicle (OLEV) concept has been introduced, where vehicles are propelled through the wirelessly transmitted electrical power from the infrastructure installed under the road while moving. The absence of secure-and-fair billing is one main hurdle to widely adopt this promising technology. This paper introduces a secure and privacy-aware fair billing framework for OLEV on the move through the charging plates installed under the road. We first propose two extreme lightweight mutual authentication mechanisms, a direct authentication and a hash chain-based authentication between vehicles and the charging plates that can be used for different vehicular speeds on the road. Second we propose a secure and privacy-aware wireless power transfer on move for the vehicles with bidirectional auditability guarantee by leveraging game-theoretic approach. Each charging plate transfers a fixed amount of energy to the vehicle and bills the vehicle in a privacy-aware way accordingly. Our protocol guarantees secure, privacy-aware, and fair billing mechanism for the OLEVs while receiving electric power from the road. Moreover our proposed framework can play a vital role in eliminating the security and privacy challenges in the deployment of power transfer technology to the OLEVs.}
\end{abstract}
\begin{IEEEkeywords}
    Online Electric Vehicle, Wireless Power Supply, Privacy, Auditability, Billing
\end{IEEEkeywords}}

\maketitle

\IEEEdisplaynotcompsoctitleabstractindextext
\IEEEpeerreviewmaketitle

\section{Introduction}
As the fuel price is going up, alternative fuel vehicles, in particular electric vehicles, are getting more attentions. Previously, electric vehicles were suffering from various issues such as low reliability, high consumer satisfaction, and low return on investment~\cite{Romm2006}. However, thanks to the recent advancements in automotive, electronics, and communication technologies, electric vehicles have overcome the issues and have become pervasive on the present highways. Still, there are a number of challenges to deal with in order to make electric vehicles more practical. One of the most crucial problems is that the capacity of the state-of-art battery for electric vehicles is not sufficient to drive the cars over a long distance without recharging. It is well-known that the battery technology has been slowly improved \cite{Li2014}, and thus we can hardly expect to have a much higher capacity battery for electric vehicles in the near future. Meanwhile, as of today, there is a relatively small number of places to recharge the battery of an electric vehicle along a highway~\cite{Timpner2014}. Most of all, compared to the time to refuel a traditional vehicle with combustion engine, it takes a significant longer time to charge the battery through plug-in technology in a charging station~\cite{Li2012} or by staying on a wireless charging plate (CP)~\cite{Musavi2012}.

Recently, the concept of online electric vehicle\footnote{http://olev.kaist.ac.kr/en/index.php} is introduced to alleviate the aforementioned issues of electric vehicles, where electric vehicles with a special onboard unit can obtain the electricity on the move in a wireless manner while passing over a road surface under which power-line is installed. It is envisioned that this new technology will make a significant contribution to expand the adoption of electric vehicles. Despite the apparent benefits, there are still a number of issues to identify on the wide deployment of online electric vehicles. In particular, consider the problem of designing a proper billing scheme for this wireless-charging-on-the-move strategy. One straightforward solution would be taking a picture of every vehicle which is entering a road designed for online electric vehicles, and send the flat amount of bill to each of them. This strategy works well for many vehicles with combustion engine on modern toll highways where they pay their toll tax in a wireless and efficient manner. However, it is clearly not fair for online electric vehicles with almost fully charged battery to pay the same amount of money paid by those vehicles with almost empty battery since the driver of a vehicle with the fully charged battery may not want to use the online power service to save cost. In addition, collecting the pictures of the vehicles entering every part of the road is almost not possible and also can cause a privacy issue, and therefore, it is not desirable even though this strategy is widely used for toll tax collection at toll plaza on the highways. Moreover, recently radio frequency identification (RFID) gained a lot of attention from the service providers and has been widely deployed due to its simple operation and low cost~\cite{HZhu2011, Yalcin2010, WXie2013, Arco2011}. However, our case is completely different from RFID scenario because in our case the requirements for traceability and deniability are critical when compared to RFID authentication and billing. Hence the comparison between RFID-based solutions and our scenario would not be fair.

Motivated by our observations, in this paper, we propose a new secure-and-fair billing framework for online electric vehicles. In detail, we propose to adopt a road which consists of a series of short-and-equal-sized electricity supply segments, each of which serves as a unit for billing. Before an online electric vehicle enters a new segment, it can decide to use the electricity while moving over the segment, or deny it depending on its current battery level. Once decided to use, it needs to authenticate itself to the segment and obtain a secret to consume electricity (see Fig.~\ref{fig:roadinfra}). To improve the degree of fairness and service granularity, i.e. to be billed for actual use only, it is important to make the segment short. At the same time, to improve the efficiency of the segment, which is the actual rate of the segment used for charging against the portion of the segment used for other use such as authentication, it is crucial to design the authentication protocol to be as lightweight as possible. To guarantee the privacy of each driver and reduce the operation cost, it is highly desirable not to use camera for billing. Rather than using camera to take the picture of each vehicle to enter every segment, we propose a wireless communication based conditional privacy mechanism so that the real identity of the driver can be exposed only if there is a legal need such as refusing to pay after actual electricity consumption. Finally and most importantly, we provide a mutual audit mechanism through which a driver cannot deny the usage of legitimate electricity as well as the electricity service provider cannot overcharge.

\begin{figure}[t!]
\begin{center}
        \includegraphics[totalheight=3cm]{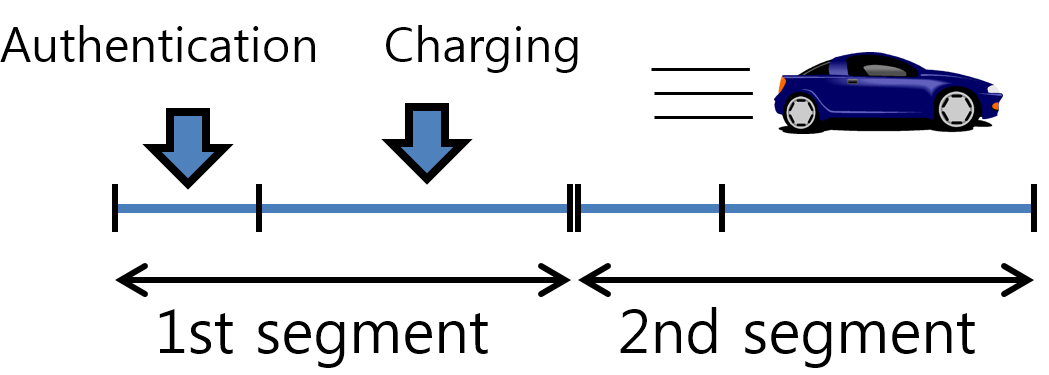}
    \caption{Online electric vehicle can charge its battery while moving after the authentication with the highway segment.}
    \label{fig:roadinfra}
		\end{center}
\end{figure}

In this paper, we propose a security framework for electric vehicles which support the following relevant features, and thus the contributions of this paper are:
\begin{enumerate}[a.]
\item \textit{Bidirectional audit}: we design a mechanism that guarantees privacy-aware bidirectional auditability for both electric power providing authority and the vehicles. To deal with the billing and auditing, we propose a semi-simultaneous billing where each vehicle, when it charges its battery, is billed on plate-by-plate basis where each plate delivers a constant amount of energy. In other words, the vehicles are billed with a fixed amount.
\item \textit{Conditional privacy}: we use multiple pseudonymous mechanism to preserve the conditional privacy of the vehicles at every stage of the protocol.
\item \textit{Mutual authentication}: We devise a fast and lightweight authentication mechanism for vehicles and the charging plates keeping in mind the portion of the charging plate designated for authentication.
\item \textit{Game-theoretic approach}: We employ a game theoretic approach to model the proposed bidirectional auditability mechanism in order to establish Nash Equilibrium between the charging plate and the vehicle.
\end{enumerate}

This paper proceeds as follows. Section II outlines the state of the art regarding wireless power transfer followed by the system model and problem definition in Section III. In Section IV, we outline our proposed scheme and analyze our system in Section V. In Section VI, we give our concluding remarks.

\section{State of the Art}
Today, some of world most renowned automobile companies are producing battery propelled vehicles and it is envisioned that soon electric vehicles will outclass the conventional automobiles due to economic and environmental reasons. The technological breakthrough in both electrification technology and the energy storage technology has made it possible for the automobile companies to achieve this milestone. From the studied conducted so far, it can be inferred that in the near future, most of the fossil fuel propelled vehicles will possibly be replaced by the electric vehicles (see \cite{Weissenger2010}). In \cite{Weissenger2010}, Weissenger et al. outlined the speed range, storage range, and the battery types of vehicles till the year 2008. To date, many efficient charging schemes have been proposed in the literature to save the commute time of the drivers \cite{Hoke2011}. However, the frequency of recharging is still a problem that needs to be addressed.

To motivate the use of electric vehicles, a new concept of wireless power transfer (WPT) was introduced \cite{Musavi2012}. In \cite{Musavi2012}, authors carried out a detailed survey regarding wireless power transfer and covered many dimensions such as the distance between the transmitting and receiving entity, cost of the technology and so forth. The concept of green car was introduced in 2009 by KAIST, South Korea by the name of online electric vehicle (OLEV) \cite{Jang2012}. The motivation for OLEV was the weight and the cost of the battery in electric vehicles, low frequency of charging, fast installation, low maintenance cost and so forth. To date, remarkable results have been achieved by this project and currently they run prototype buses in the KAIST campus South Korea \cite{Ko2013,Suh2013}. Nonetheless, such online vehicle would require massive power-line infrastructure installed under the road. Moreover coverage would be another issue due to the cost factor. From mutual authentication standpoint, Chuang et al. \cite{Chuang2011} proposed a hash-based authentication mechanism called trust-extended authentication mechanism (TEAM). TEAM adopts the concept of transitive trust relationships where a normal vehicle becomes the trusted entity after successful authentication and can delegate the authentication process in the absence of the authorities. Moreover TEAM does not protect the privacy since original ID is shared during authentication. On the other hand, even if a normal vehicle successfully authenticates itself, does not guarantee that it will not be malicious while delegating authentication function. Therefore, we believe that the transitive trust may lead to even worst situation from security standpoint in VANET.

Billing is an important requirement in commercial networks and it can be abstractly divided into two classes, time-based billing and content-based billing. In the former, nodes (subscribers or consumers) pay the service fee based on time, for instance the internet access charges and in the latter case, nodes pay based on the content they receive where the specific content costs a constant amount of money, for example downloading a song from iTunes and so forth~\cite{ YaoTMC2014}. A number of billing mechanisms have been proposed for wireless mesh networks~\cite{ HZhu2008, Lee2010} commercial VANET application~\cite{YaoTMC2014, YaoTIS2014}. In~\cite{YaoTMC2014}, the authors propose a portable authentication/authorization/accounting (AAA) framework for purchasing services from the RSUs. They use signature-based and key policy attribute-based encryption (KP-ABE) in their billing mechanism to attain localized fine-grained access control and also employ E-coin. In another work, Yeh et al.~\cite{ YaoTIS2014} propose a local and proxy-based authentication and billing scheme to lessen the long-distance communication overhead. They also propose an incentive-aware multi-hop forwarding for vehicles in the VANET. They use batch verification mechanism in their scheme to fulfill the security requirements and signature-based communications. However, our service scenario is different because we deal with the charging plates installed underneath the road and such sophisticated cryptographic primitives will cause enormous delay. Therefore aforementioned schemes are not directly applicable in our scenario.

In this paper, we to the best of our knowledge, for the first time propose a secure and privacy-aware mechanism to transfer the electric power to propel the vehicles moving on the road where the power transfer technology is installed underneath the road in the form of charging plates. Moreover our proposed scheme also guarantees bidirectional audit. First we propose two lightweight and fast privacy-aware mutual authentication mechanisms between the vehicles and the charging plates installed under the road. The two authentication mechanisms can be adapted with different vehicular speeds and the length of the charging plates. Then we propose a secure charging mechanism for vehicles with bidirectional auditability guarantee where vehicle is billed in a semi-simultaneous manner on the per-plate basis. We also employ a game theoretic approach for modeling and guaranteeing auditability by establishing Nash Equilibrium between the charging plates and the vehicles.

\begin{figure}[t!]
\begin{center}
        \includegraphics[totalheight=6cm]{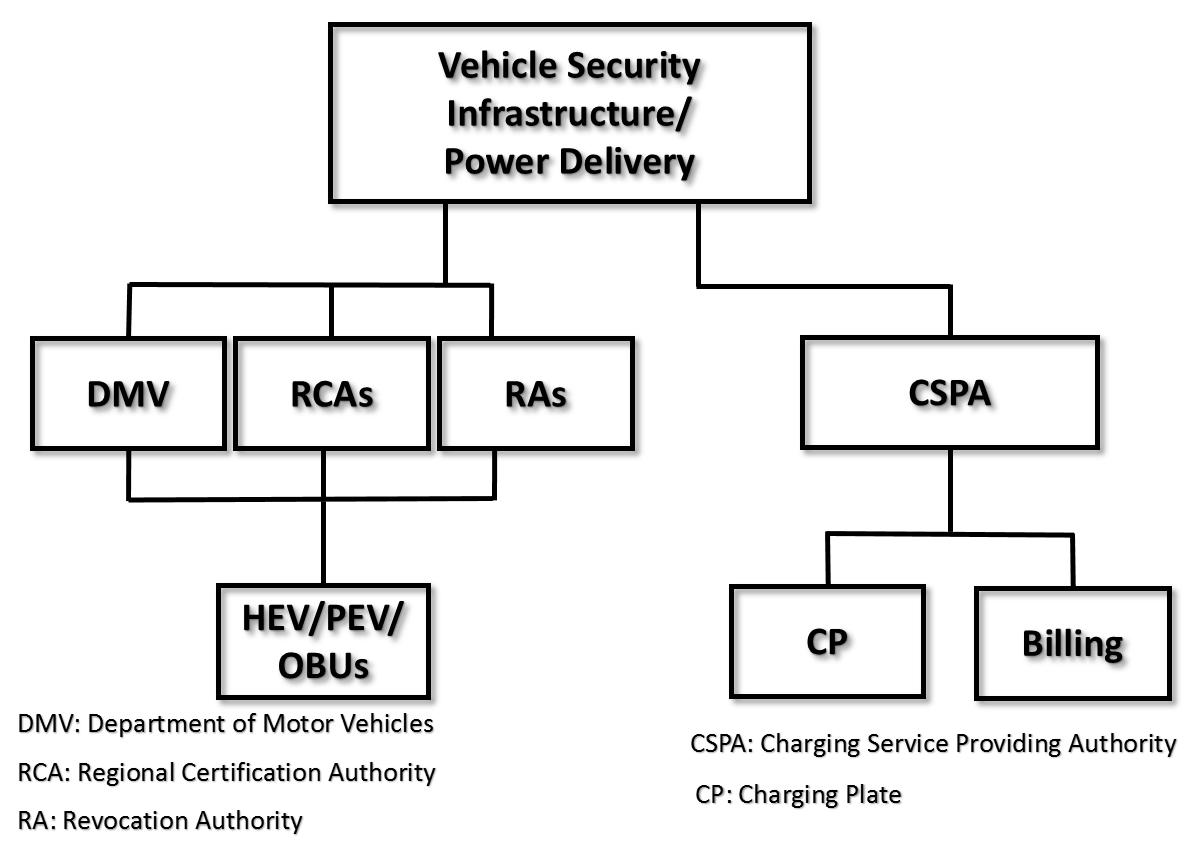}
    \caption{Taxonomy of System Participants}
    \label{fig:taxonomy}
		\end{center}
\end{figure}

\section{System Model and Problem Statement}

\subsection{System Participants and Network Model}
Our proposed system model consists of electric vehicles and an electric power delivery infrastructure. Electric power delivery service is exercised by the charging service providing authority (CSPA) that is responsible for providing the vehicles with electric charge through charging plates and bill them accordingly. The charging infrastructure is installed under the surface of the road and each road segment of certain length is covered by the charging plates. The charging plates also have a hardware for communication and computation purpose and these plates are responsible for authentication prior to battery charging, billing and logging the audit information. These charging plates communicate with both vehicles and CSPA back and forth during the charging and the billing process. We also introduce some components from vehicular ad hoc network (VANET) which are frequently assumed in VANET. These components are used by electric vehicles\footnote{Throughout the paper, the terms 'vehicle', 'vehicular node', and 'OBU' are used interchangeably and we mean electric vehicle collectively by these terms.} for initialization and registration. These components include vehicle management, registration and revocation authorities. The department of motor vehicles (DMV) is at the top of the hierarchy where every vehicle should be registered beforehand. Revocation authorities are leveraged to revoke the identity of the vehicle when needed with the consent from law enforcement authorities (police or judiciary) in the form of a warrant. There may be heterogeneous types of vehicles on the road, but for ease of understanding, we focus only on the electric vehicles in this paper. Therefore, our proposed scheme can be easily implemented in the VANET framework, which is one of the most popular and promising future vehicular infrastructure.

The taxonomy of the system participants is shown in Fig.~\ref{fig:taxonomy} and the network model is shown in Fig.~\ref{fig:networkmodel}. We consider a fleet of electric vehicles on the road where these vehicles receive electric power from the power line installed beneath the road, depending upon the usage of the vehicle. It can be seen in Fig 1 that road segment consists of the power line distribution technology installed beneath the road in the form of charging plates. A portion of the charging plate is leveraged for authentication purpose and the rest of the charging plate is used for transferring the electrical energy to the vehicle (see Fig.~\ref{fig:roadinfra}). Vehicles mutually authenticate each other with the charging plate before receiving electric power from the plates. After successful authentication, the fixed designated amount of electrical energy is transferred to the battery and the vehicle is billed accordingly. The communication channel between vehicles, registration and revocation authorities, and the charging plates is based on Dedicated Short Range Communication (DSRC)\footnote{[Online] http://www.iteris.com/itsarch/html/standard/dsrc5ghz.htm} standard whereas charging plates are connected to the charging service providers through high speed wired links. Our proposed scheme is based on the following assumptions.
\begin{enumerate}[a.]
    \item Electric vehicles are equipped with On Board Unit (OBU) and Tamper-Resistant Module (TRM) to carry out the secure computation.
    \item DMV is a trustworthy entity and only DMV is authorized to initialize the TRM and store necessary security parameters and keys in it, whereas CSPA, charging plates and OBUs are non-trustworthy.
    \item Every vehicle is also pre-loaded with a pool of pseudonyms (traceable by revocation authorities) for privacy reasons.
    \item Charing plates are installed under the designated road segments by CSPA and are equipped with hardware that is capable of carrying out secure computation and communication operations.
    \item Vehicles change their pseudonyms at every charging phase.
    \item The charging process is not automatic and it can be started with the consensus of the driver if the battery needs to be recharged.
    \item For a single charging plate, a fixed amount of charge is transferred to the battery and a fixed amount of bill is charged to the customer, and thus a semi-static and fixed billing system.
\end{enumerate}
Moreover the proposed billing framework must fulfill the following requirements.
\begin{enumerate}[\textit{R-}1]
\item While transferring electric power to the vehicle, bidirectional auditability must be guaranteed. In other words, the vehicle must not be over-billed and the CSPA must not be under-paid.
\item The conditional privacy of the vehicle's location and the user must be preserved. The identity of a vehicle owner should be revoked to the power supplier only if it is legally necessary, e.g. refuse to pay.
\item Due to the resource constraints of the charging plate and the speed of the vehicle, the communication between charging plate and OBU, and between charging plate and CSPA must be minimal. Moreover the authentication mechanism must be very fast and lightweight.
\item The billing procedure must be verifiable by all the entities, i.e. OBU, CSPA, and DMV.
\item At the time of charging, both the players should be in the Nash Equilibrium state.
\end{enumerate}

\begin{figure}[t!]
\begin{center}
        \includegraphics[totalheight=5.6cm]{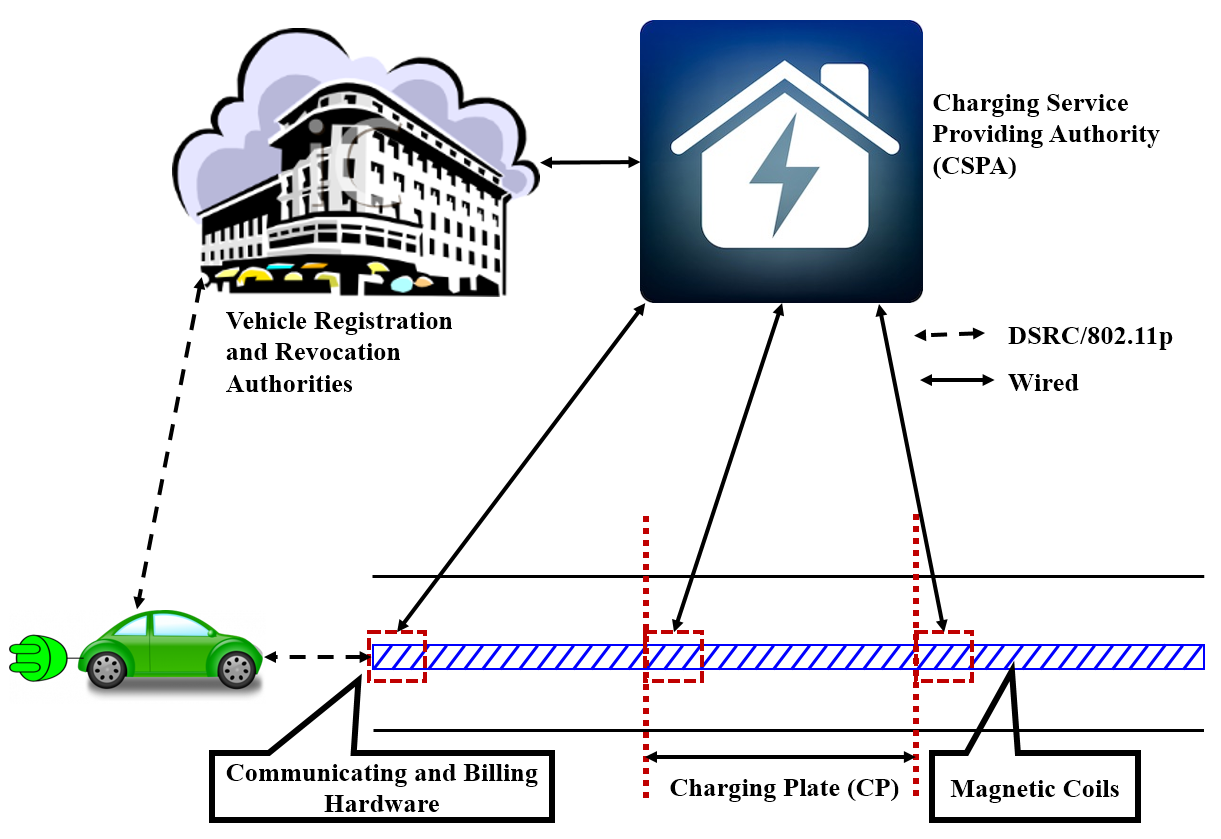}
    \caption{The Network Model}
    \label{fig:networkmodel}
		\end{center}
\end{figure}

\subsection{Threat Model}
In our threat model, we consider that both participating parties (charging plate and OBU) may be malicious. Their behavior can be malicious in terms of either bypassing the billing process or overcharging the energy receiving entity. Besides, the CSPA can also abuse the privacy of the electric power consumer vehicle by either exposing their location information or selling out their location-based profile to other third parties such as ads agencies and so forth. Moreover, the adversaries can sniff the communication between charging plate and OBU, modify it or forge it. We argue that the adversaries will have more resources than the participating entities. However, the timeliness of possible attack is a challenging front for the adversaries where the possible attacks must be performed within the stipulated time that is equal to the duration of the charging and billing.

\section{Proposed Bidirectional Auditability in Online Electric Vehicle}
In this section we outline the proposed power transfer and billing mechanism and the bidirectional audit between the charging plate and the OBU by a game-theoretic approach.

\subsection{Baseline}
  Before using the wireless electric energy transfer service on the move, vehicles must have registered with the DMV to initialize their TRM and to store the security parameters and pseudonyms in it. Additionally, the vehicles must also register with the CSPA to get the necessary security parameters, required to the charging plates at authentication stage. Whenever a vehicle\footnote{The term \lq \textit{vehicle}\rq throughout the rest of the paper should be read as electric vehicle. For the sake of simplicity we use the term vehicle instead of charging vehicles.} enters the road section with power line underneath it, it opts for either obtaining the electric power or not. If the vehicle selects the power reception, then it has to mutually authenticate with the charging plate. We propose two very fast and lightweight mutual authentication mechanisms, one is based on only hash and XOR functions and inspired by the Chuang et al.'s scheme \cite{Chuang2011}, while the second one is based on the hash chain. The former is a direct authentication between charging plate and OBU whereas the latter is authentication through CSPA. In the former scheme, charging plate incurs minimum communication delay whereas in the latter, charging plate incurs minimum computation delay. Both of the proposed scheme are suited for specific purposes that are explained in the paper. After successful authentication, the power transfer process starts and charging plate sends the billing information to both OBU and CSPA. The billing is fixed on per charging plate basis. We also model the billing and audit as a game between OBU and charging plate where both of them must achieve the Nash Equilibrium state.

\subsection{Preliminaries and Initializations}
\subsubsection{System Initialization}
Notations in Table~\ref{table:notations} are employed throughout the paper. We use pseudonymous approach for privacy preservation. In addition, in order to store the individual secret keys of the vehicles, i.e. $K_{sym}$ and $K_{V}$ in the revocation authorities (RAs), we use ElGamal encryption algorithm over elliptic curve cryptography (ECC) due to its proven security.  Let $\mathbb{G}$ be a cyclic group of prime order $q$ where $\mathbb{G}$ is generated by a generator $P$. First of all DMV chooses a random number $x\in\mathbb{Z}^{*}$ as its private and computes $PK^{+}=xP$ as its public key. DMV then uses threshold based secret share scheme \cite{Zhang2008} and divides $x$ into $j$ parts where $j$ is the number of revocation authorities, each $RA_{i}$ holds a share $x_{i}$ and $x_{i}\in(x_{1},x_{2},x_{3},...,x_{j})$. In order to construct $x$ from individual $x_{i}$, RAs must elect one of them to be group leader and construct $x$ from combination of $x_{i}$. For the selection of group leader, any available efficient group leader election mechanism in the networks can be used.

\begin{table*}
\label{tabel1}
\renewcommand{\arraystretch}{1.1}
    \caption{Legend for symbolic notations}\label{table:notations}
		\centering
    \begin{tabular}{|c|c|}
		\hline
      \textbf{Notation} & \textbf{Explanation}\\
			\hline
			\hline
					$\mathbb{G}$ & Cyclic group of Order $q$\\
			\textit{P} & The generator of $\mathbb{G}$\\
			\textit{r} & Random nonce\\
			\textit{x,$x_{i}$} & Private master key and \textit{i}-th share of \textit{x}\\
			$PK^{+}$ & Public key corresponding to \textit{x}\\
			\textit{$K{_{DMV}^+}$,$K{_{DMV}^-}$} & Public private key pair of DMV for signing pseudonyms\\
			\textit{$c_{V}$} & Vehicle V's secret initial counter used in pseudonym generation\\
			\textit{$inc_{V}$} & Incrementing factor for pseudonyms \\
			\textit{$K_{sym}$} & Vehicle $V$'s  AES symmetric key used in pseudonym generation\\
			\textit{$K_{V}$} & $V$'s individual secret key \\
			\textit{$PS{_{OBU}^i}$} & Vehicle \textit{V}'s \textit{i}th pseudonym \\
			\textit{MSK} & Hash Chain based Master secret key \\
            \textit{$X_{OBU}$} & Hash of the overall pseudonym pool \\
            \textit{$Cert_{OBU}$} & OBU's anonymous certificate issued by a certification authority \\
            \textit{$PWD_{OBU}$} & OBU's initial password to log in to the system in order to start registration\\
			\textit{$H(\cdot)$} & A MaptoPoint hash function as $H:\{0,1\}^* \rightarrow \mathbb{G}$ \\
            \textit{$h(\cdot)$} & Collision-resistant hash function \\
			\textit{$h_{k}(\cdot)$} & Keyed hash function \\
			\textit{$\oplus$} & Exclusive OR operation \\
			$\mid\mid$ & Concatenation function\\
			\hline
    \end{tabular}
\end{table*}

\subsubsection{TRM Installation}
Only DMV is authorized to install the TRM in the vehicle for the first time after purchase or re-purchase. The owner of the vehicle has to personally visit the DMV for the installation and/or initialization of the TRM. After confirming the credentials of the vehicle and its owner, DMV initializes TRM and saves the system parameters in the TRM including $(\mathbb{G},q,P,PK^{+},c_{V},inc_{V})$. Additionally DMV also preloads TRM with vehicle's individual secret key $K_{V}$ and pseudonym generation key $K_{sym}$.

\subsubsection{Pseudonyms Assignment}
DMV generates $n$ number of pseudonyms for each vehicle by taking vehicle’s secret counter $c_{V}$ and increment it by vehicle $V$'s incrementing factor $inc_{V}$. The pseudonyms are generated as follows:
$PS{_{OBU}^i}=\{{(\alpha)_{K_{sym}} \|(\alpha\oplus ID)_{K_{V}} \|n_{i} }\}_{K{_{DMV}^-}}$  where $\alpha=c_{V}+n_{i}\cdot inc_{V}$, $n_{i}$ is the current count of generated pseudonym (note that it may not be linear), and $ID$ is the vehicle's identity. Then DMV stores these pseudonyms in its database and indexes it with the value of $n$. After all pseudonyms are generated for the vehicles, DMV saves these pseudonyms in vehicle's TRM along with another value $X_{OBU}=h(PS{_{OBU}^1},PS{_{OBU}^2},...,PS{_{OBU}^n})$ and sends the anonymous pseudonyms to RAs as well. In order to help in revocation, TRM also encrypts both $K_{sym}$ and $K_{V}$ and sends it to RAs which serve as a trapdoor in revocation. The aforementioned keys are encrypted with public master key using ElGamal encryption as follows:
$$
    \begin{array}{l}
     \delta_{1}=rP, \delta_{2}=(K_{sym}\|K_{V})\oplus H(rPK^+)
    \end{array}
$$
 $r$ is a random nonce selected by the TRH for this encryption, then it sends $\{\delta_{1},\delta_{2}\}$ to RAs. However RAs can only decrypt the keys $K_{sym}$ and $K_{V}$ when they have a warrant to do so after colluding to construct $x$ from individual $x_{i}$. The reason for saving encrypted keys in RAs’ database is twofold: RAs use these keys to revoke a vehicle in case of any dispute and for privacy reasons; we do not want RAs to link pseudonyms and/or extract $c_{V}$ and $inc_{V}$ from the beacons until necessary, otherwise.

 It is also to be noted that when a vehicle consumes all the pseudonyms it has in the pseudonym pool, it needs to obtain a batch of fresh pseudonyms from the DMV. The vehicle does not need to be physically present at DMV, rather it can obtain the pseudonyms from DVM by connecting through internet. We assume that the existing pseudonym refilling strategies can be used~\cite{ Ma2008, Benin2010, Petit2012, Mahmoud2014}. Now we outline the two mechanisms for mutual authentication and billing.

\subsection{Direct Mutual Authentication (DMA)}

In the direct approach, OBU and CP mutually authenticate each other without intervention of the CSPA. First of all CSPA creates $l$ number of master secret keys MSK based on hash chain by selecting a secret $s$ where $MSK_{i}=h^{i} (s)$ and sends the key to DMV as follows:
$$
    \begin{array}{l}
     CSPA\rightarrow DMV: MSK_{i}(i=1,2,3,...,l)
    \end{array}
$$
$MSK_{i}$ is a hash chain based master secret key which is based on a secret $s$ and $MSK_{l}=h^{l} (s)$. In other words, each $MSK_{i}$ is used for a designated amount of time, and CSPA updates $MSK_{i}$ after regular intervals. Since $MSK_{i}$ is distributed by CSPA, it can be updated in timely manner by CSPA and vehicles will receive the updated $MSK_{i}$ in their next registration phase with CSPA. After that, DMV also sends $X_{OBU}$ of the registered vehicles to CSPA for records.
$$
    \begin{array}{l}
     DMV\rightarrow CSPA: X_{OBU}
    \end{array}
$$
Each vehicle has a pool of legitimate traceable pseudonyms from DMV and at the time of authentication, it can use any pseudonym to start charging. The vehicle will be billed based on the used pseudonym.

\subsubsection{Vehicle Registration with CSPA}
The vehicle, most precisely its OBU, must register with CSPA before charging. We assume a secure channel between CSPA and the vehicle. The registration of the vehicle proceeds as follows. The vehicle starts with the password and upon access, the CSPA calculates some security parameters for the vehicle and sends it back to the OBU. The different steps and their descriptions are given below:

\begin{enumerate}[a.]
\item $OBU\rightarrow CSPA: PWD_{OBU},X_{OBU}$. OBU sends these values to CSPA on a secure channel. If $PWD_{OBU}$ is valid, then the protocol will proceed.
\item CSPA calculates the following 3 values, i.e. $H_{1},H_{2}$, and $H_{3}$. $H_{1}$ is used as a secure parameter kept by CSPA. $H_{2}$ and $H_{3}$ are the authentication parameters that are sent back to the OBU.
   $$
    \begin{array}{l}
     H_{1}=h(s\|X_{OBU})\\
     H_{2}=h^{2}(s\|X_{OBU})\\
     H_{3}=MSK\oplus H_{1}
    \end{array}
$$
\item CSPA registers the OBU by sending the hash function $h( )$, $H_{2}$, and $H_{3}$ to the OBU and recording these parameters by storing it in its database against the value of $X_{OBU}$.\\
    $CSPA \rightarrow OBU: X_{OBU}, h( ), H_{2}, H_{3}$.
\end{enumerate}

\begin{figure*}[t!]
\begin{center}
        \includegraphics[totalheight=7cm]{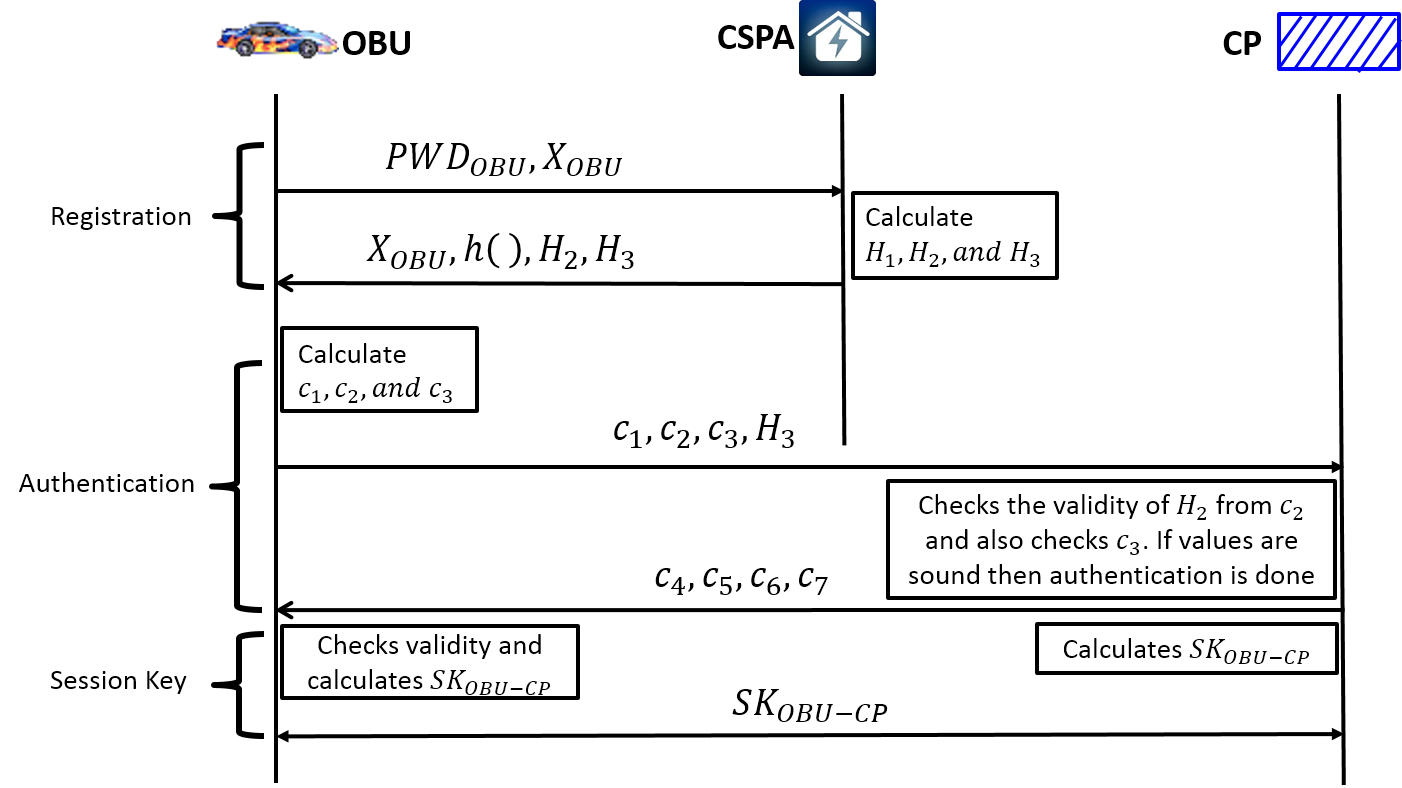}
    \caption{Authentication process in DMA scenario}
    \label{fig:dma}
		\end{center}
\end{figure*}

\subsubsection{Authentication between OBU and CP}
After completing the registration phase with CSPA, when the vehicle passes through the section of the power-line-enabled road, it starts authentication process with the charging plate. The vehicle starts by selecting a pseudonym from its pool of pseudonym and proceeds the whole electric power receiving process with the selected pseudonym anonymously. The partial per charging bill is prepared based on the presented pseudonym in the authentication/charging process. It is to be noted that a fixed amount of electric power is delivered to the vehicle's power reception module that costs a fixed amount. The comprehensive mutual authentication steps are given below:
\begin{enumerate}[a.]
\item OBU selects a pseudonym $PS{_{OBU}^i}, i=1,2,3,...,n$ from its pool and calculates the following parameters.
    $$
    \begin{array}{l}
     c_{1}=h(H_{2})\oplus PS{_{OBU}^i}\\
     c_{2}=h(PS{_{OBU}^i})\oplus X_{OBU}\\
     c_{3}=h(h(PS{_{OBU}^i})\|c_{2}\|H_{3})
    \end{array}
$$
\item Then OBU sends CP, the values calculated in previous step along with $H_{3}$.
    $$
    \begin{array}{l}
     OBU \rightarrow CP: c_{1}, c_{2}, c_{3}, H_{3}
    \end{array}
$$
\item CP executes the following steps.
\begin{enumerate}[i.]
\item Start with $H_{3}$ and extract the secret $H_{1}$ as $MSK\oplus H_{1}\oplus MSK$.
\item It calculates $H_{2}$ and extracts $PS{_{OBU}^i}$ from $c_{1}$.
    \item Then it checks for the value $c_{2}$ if it is equal to $h(PS{_{OBU}^i})\oplus X_{OBU}$.
        \item And check if $c_{3}$ is equal to the retrieved values $h(h(PS{_{OBU}^i} )\|c_{2} \|H_{3})$ then the OBU is authenticated, otherwise the authentication fails. It is to be noted that there will be a fixed number of tries, failing which will halt the authentication process.
\end{enumerate}
\end{enumerate}

After successful authentication, OBU and CP initiate the protocol to construct a session key $SK_{OBU-CP}$  which is used for the later communication and billing parameters. The initialization of session key from CP serves as an acknowledgement to authentication as well. The OBU will not have been authenticated, otherwise. CP extracts $X_{OBU}$ from the received messages $c_{2}$ and selects its nonce as $r_{cp}$ and calculates session key as $SK_{OBU-CP}=h(PS{_{OBU}^i} \|r_{cp})$. CP also calculates the following parameters.
$$
    \begin{array}{l}
     c_{4}=ID_{cp}\oplus r_{c}\\
     c_{5}=r_{c}\oplus h(h(PS{_{OBU}^i}))\\
     c_{6}=h(r_{c}\|c_{4}\|c_{5})\\
     c_{7}=H_{1}\oplus h^2(PS{_{OBU}^i})
    \end{array}
$$

$ID_{cp}$ is the ID of the charging plate. After calculating the above values, CP constructs an authentication reply message and sends it back to OBU. This authentication reply means that OBU has been authenticated and other parameters will be sent for the session key calculation.  The following authentication reply message is sent to OBU.
$$
    \begin{array}{l}
     CP\rightarrow OBU: c_{4}, c_{5}, c_{6}, c_{7}
    \end{array}
$$

From the above reply message, OBU extracts $r_{c}$ from $c_{4}$ and checks if $c_{6}$ is equal to $h(r_{c} \|c_{4} \|c_{5})$. If the information is correct, then the OBU authenticates CP as well and computes the session key $SK_{OBU-CP}=h(PS{_{OBU}^i} \|r_{c})$, extracts $H_{1}$ from $c_{7}$ and stores it as a security parameter.

At this point in time, the mutual authentication is completed and the electric power reception process will start based on the established session key $SK_{OBU-CP}$.

When these two entities (OBU and CP) authenticate each other then the vehicle will receive the designated power from the road (charging plate). At the end of each phase of the charging plate at $CP_{i}$, a unit cost $C_{i}$ will be accumulated to the account of the OBU against its presented $PS{_{OBU}^i}$. At the end of the whole power transfer from a number of charing plates, both OBU and CSPA will have the log of the amount of transferred power and the OBU will be billed accordingly which will be verifiable by both CSPA and the OBU. The whole authentication process in case of DMA, is shown in Fig~\ref{fig:dma}.

\begin{figure*}[t!]
\begin{center}
        \includegraphics[totalheight=7cm]{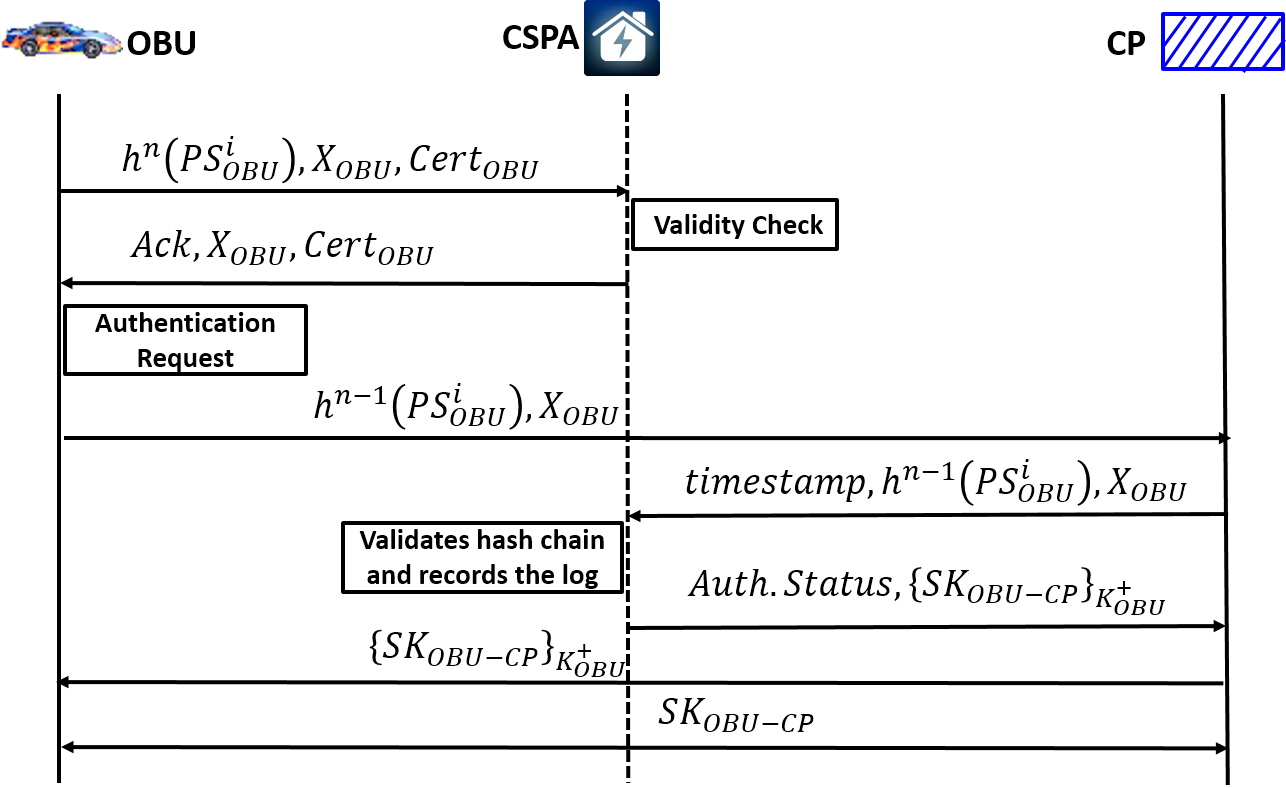}
    \caption{Authentication process in PHA scenario}
    \label{fig:pha}
		\end{center}
\end{figure*}

\subsection{Pure Hash Chain based Authentication (PHA)}
For a fast moving vehicle the DMA approach can be applied where computation is done locally by the charging plate; however, DMA may incur reasonable computation delay.  Therefore we propose another indirect authentication mechanism based on hash chain carried out by CSPA where computation delay is minimum whereas a small communication delay is introduced. Moreover this mechanism is most favorable for low speed vehicles. DMV provides the OBU with $n$ pseudonyms $PS{_{OBU}^i},i=1,2,3,...,n$, and hash chain corresponding to each pseudonym $h(PS{_{OBU}^i} ),h^{2}(PS{_{OBU}^i}),...,h^{n}(PS{_{OBU}^i})$. We assume that for the sake of charging the vehicle's battery, the vehicle registers with the CSPA based on the policy. In other words, the car uses a new hash chain based on the designated interval in the policy. Authentication process takes place as follows:
\begin{enumerate}[a.]
\item The vehicle registers with CSPA and sends one of the hash chain head to CSPA $h^{n} (PS{_{OBU}^i})$.
    $$
    \begin{array}{l}
     OBU\rightarrow CSPA: h^{n}(PS{_{OBU}^i}), X_{OBU}, Cert_{OBU}
    \end{array}
$$
\item 	At the time of authentication and request for charging, the vehicle must provide the CP with a member hash from the registered hash chain $h^{n-1} (PS{_{OBU}^i})$.
    $$
    \begin{array}{l}
     OBU\rightarrow CP: h^{n-1}(PS{_{OBU}^i}), X_{OBU}
    \end{array}
$$
\item CP forwards this value to the CSPA.
$$
    \begin{array}{l}
     CP\rightarrow CSPA: timestamp, h^{n-1}(PS{_{OBU}^i}), X_{OBU}
    \end{array}
$$
\item CSPA validates the hash, checks if $h(h^{n-1}(PS{_{OBU}^i}))=h^{n}(PS{_{OBU}^i})$ and replies accordingly. CSPA also replaces $h^{n}(PS{_{OBU}^i})$ with $h(h^{n-1} (PS{_{OBU}^i}))$. In addition to authentication, CSPA also provides the CP with a session key $SK_{OBU-CP}$ and saves it in its database with time and the $X_{OBU}$. It is to be noted that, CSPA issues a single session key for all the plates for a particular vehicle and a particular hash chain.
    $$
    \begin{array}{l}
     CSPA\rightarrow CP: Auth.Status, \{SK_{OBU-CP}\}_{K{_{OBU}^+}}\\
     CP\rightarrow OBU: \{SK_{OBU-CP}\}_{K{_{OBU}^+}}
    \end{array}
$$
    \item If the authentication is successful then charging plate will transfer the electric power to the vehicle, otherwise the process halts. It is to be noted that one hash chain is long enough to use it for the whole day. For the next day, the vehicles can register another hash chain. This process will still preserve the conditional privacy of the OBU.
\end{enumerate}
The protocol for pure hash chain based authentication mechanism is given in Fig~\ref{fig:pha}.

\subsection{Power Transfer, Billing, and Auditability}
Once the mutual authentication is completed between the charging plate and OBU, the vehicle starts to receive the power from the road (charging plate) semi-simultaneously with billing. More precisely, the vehicle is billed at a charging plate level. At the end of the charging process, the total bill is accumulated both at OBU and the CSPA. The process is explained in case of both direct and indirect authentication as shown in Fig~\ref{fig:ptindma} and Fig~\ref{fig:ptinpha}.

\subsubsection{Online Electric Power Transfer and Billing in DMA}
In case of DMA, CSPA has access to the pseudonym used in the authentication process through CP. Therefore after a successful authentication and establishment of a session key, vehicle requests for the electric power and presents the charging palte and CSPA with the pseudonym and other parameters for the billing purpose. The series of steps are given below:
$$
    \begin{array}{l}
     OBU\rightarrow CP:\\
     \{timestamp\|Charging Req.\|PS{_{OBU}^i}\|h_{K_{V}}(\alpha)\}_{SK_{OBU-CP}},\\
     \alpha=(timestamp\|Charging Req.\|PS{_{OBU}^i})\\
     CP\rightarrow OBU: \{Ack, timestamp\|PS{_{OBU}^i}\|h_{K_{V}}(\alpha)\}
    \end{array}
$$
At this point the charging plate starts billing and sends the bill log to OBU and to CSPA. The bill is logged against two values, $X_{OBU}$ and the consumed pseudonym $PS{_{OBU}^i}$.
$$
    \begin{array}{l}
     CP\rightarrow CSPA:(timestamp\|PS{_{OBU}^i}\|X_{OBU}\|Cost_{CP_{i}})\\
     CSPA: Cost_{X_{OBU}}=\sum_{i=1}^{n}Cost_{CP_{i}}
    \end{array}
$$
In case of DMA, the CSPA accumulates all partial billing information from individual charging plates and bill the OBU accordingly. It is worth noting that we use a constant cost per charging plate.

\subsubsection{Online Electric Power Transfer and Billing in PHA}
In case of PHA, the CSPA does not have access to individual pseudonyms; rather it maintains the billing information based on the $X_{OBU}$ value. After successful authentication, OBU requests for online power transfer and the electric power transfer begins. Meanwhile, CSPA bills the cost for the current CP and accumulates it to the account against $X_{OBU}$ with timestamp information. The series of steps are given below:

$$
    \begin{array}{l}
     OBU\rightarrow CP:\\
     \{timestamp\|Charging Req.\|X_{OBU}\|h_{K_{V}}(\beta)\}_{SK_{OBU-CP}},\\
     \beta=(timestamp\|Charging Req.\|X_{OBU})\\
    \end{array}
$$
The charging plate forwards the request to CSPA where CSPA constructs a reply message for the OBU. Meanwhile when charging plate receives the reply message from CSPA, the vehicle will start receiving power and CSPA will record the cost for the current charging plate.
$$
    \begin{array}{l}
     CSPA\rightarrow CP:(timestamp\|X_{OBU}\|h_{K_{V}}(\beta))
    \end{array}
$$
Charging plate forwards the message to OBU accordingly whereas CSPA calculates the bill for current charging plate and accumulates with the partial bills from previous charging plates.
$$
    \begin{array}{l}
     CSPA: Cost_{X_{OBU}}=\sum_{i=1}^{n-1}Cost_{CP_{i-1}}+Cost_{CP_{i}}
    \end{array}
$$
This way, CSPA calculates the bill partially simultaneously with charging process which is the motive of our GHP game. We will discuss our OBU-CP game in the next subsection.

\begin{figure*}[t!]
\begin{center}
        \includegraphics[totalheight=7cm]{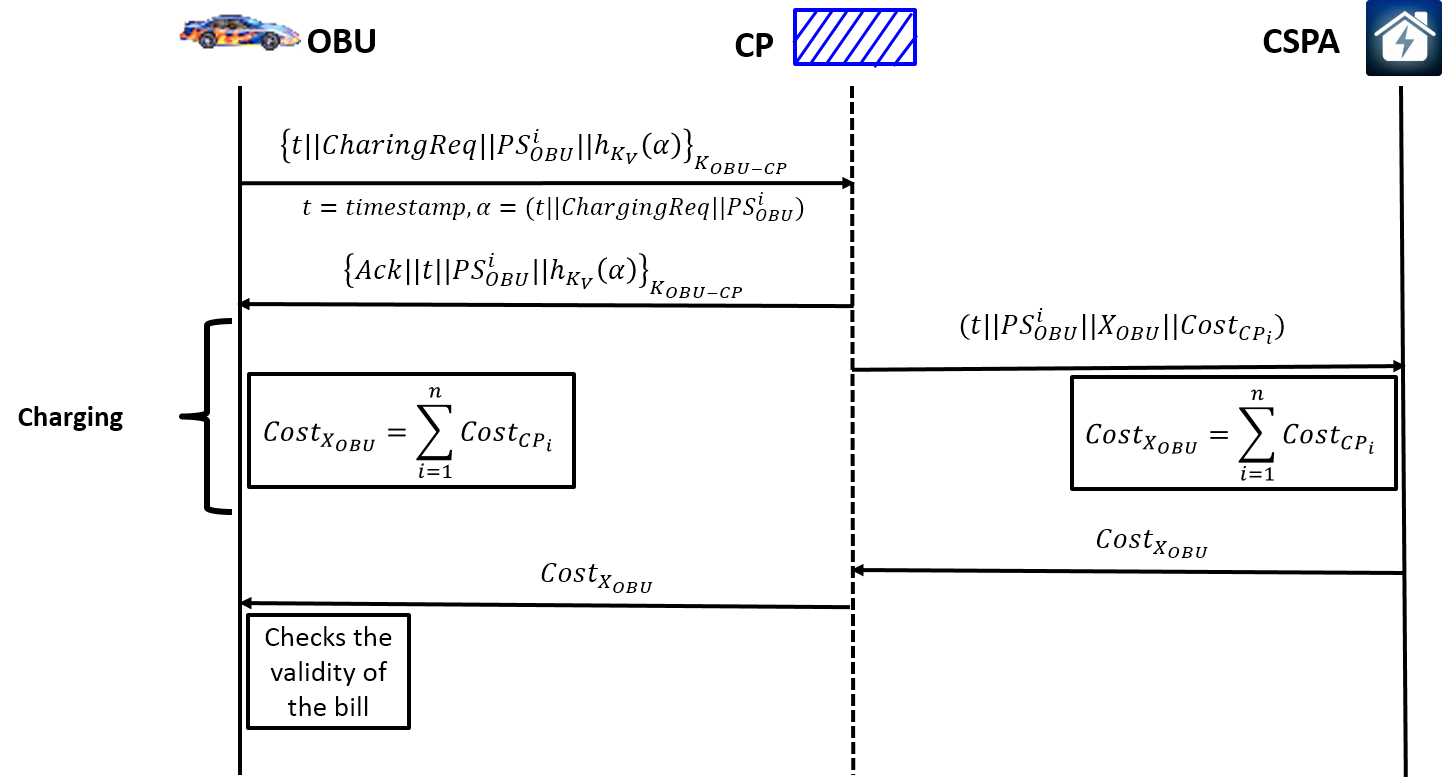}
    \caption{Power transfer and billing process in DMA scenario}
    \label{fig:ptindma}
		\end{center}
\end{figure*}

\subsection{Bidirectional Auditability Game}
We formalize the bidirectional auditability as an instance of the Guest-Host problem (GHP) with game-theoretic approach. In the GHP, the guest wants to use the hotel for a few days and does not want to get a bill (no) greater than actual use. Whereas the host wants to charge the bill to the guest for the actual (or more) use but wants to make sure that the guest does not deny the actual use. In our scenario, OBU can be assumed as guest and the charging plate as the host.

We explain the bidirectional auditability between OBU and charging plate with the help of an uncooperative game $\mathcal{G}$ which is defined as a triplet $(\mathcal{P}$,$\mathcal{S}$,$\mathcal{PO})$. $\mathcal{P}$ is the set of players of the game, $\mathcal{S}$ is the set of strategies followed by the players and $\mathcal{PO}$ is the set of payoff functions as a result of the players' strategies.
\subsubsection{Players}
The set of players $\mathcal{P}=\{OBU,CP\}$ corresponds to the set of OBUs and the charging plate (CP). There can be multiple OBUs serviced by the CP but we assume that at certain instant of time, only single OBU will be entertained at the start of the CP. Without loss of generality, multiple OBUs can recharge their batteries after successful authentication.

\subsubsection{Strategy/Move}
In our game, each of the two players follow two strategies, either $Cooperate (C)$ or $Deviate (D)$ and thus $S_{i}=\{C,D\}$. Moreover in cooperation state, each player makes a move that produces a better payoff. By cooperating, a vehicle changes its pseudonym every time is charges the battery and the CP bills it accordingly. Whereas in case of deviation, OBU misbehaves and does not follow the protocol or CP overcharges the bill against OBU.

\subsubsection{Payoff Function}
We formulate a payoff function for both players of the game. The payoff function $\mathcal{PO}(t)$ is given by:
$$
    \begin{array}{l}
    \mathcal{PO}_{i}(t)=a_{i}(t)-cost_{i}(t)
    \end{array}
$$
$a_{i}(t)$ is the advantage of player $i$  at time $t$ and $cost_{i}(t)$ is the cost of achieving $a_{i}(t)$. It is to be noted that $a_{i}(t)$ depends upon the successful battery charging and the normal billing and $cost_{i}(t)$ depends upon the pseudonym change for charging and the authentication overhead for both CP and OBU.

In game $\mathcal{G}$, the players do not know the strategic behavior of the opponent unless the complete billing has been done. Since we have only two strategic behaviors from the set $S_{i}=\{C,D\}$, there is $50\%$ probability for the players to guess the behavior of the opponent keeping in mind its payoff.

\textit{\textbf{Definition}: The best response $br_{i}(S_{j\in [C,D]})$ on the part of a player $i$ is a move such that:}
$$
    \begin{array}{l}
    br_{i}(S_{j\in [C,D]})=\max(\mathcal{PO}_{i}(s_{i}))
    \end{array}
$$

In other words, the best response of a player $i$ will be such that it results in a maximum payoff. If the two players unknowingly strategically give best responses to each other in the game, then the opponent will not have any chance to deviate from the game and the result of the game is called Nash Equilibrium (NE). When the game reaches an NE, then the players cannot increase their payoff by changing their strategy or deviating from the best response strategy.

\subsection{Nash Equilibrium in $\mathcal{G}$}
In NE, every player plays its best move and correctly anticipates that its opponent will do the same. In Table \ref{table:strategies}, we outline the possible moves made by each players.

\begin{table}
\label{tabel2}
\renewcommand{\arraystretch}{1.1}
    \caption{Strategic moves of the players in the game $\mathcal{G}$}\label{table:strategies}
		\centering
    \begin{tabular}{|c|c|c|}
		\hline
      \textbf{OBU $\backslash$ CP} & \textbf{C} & \textbf{D}\\
			\hline
			\textbf{C}	 & $1,1$ & $1,0$\\
            \hline
				\textbf{D}	 & $0,1$ & $-1,-1$\\
			\hline
    \end{tabular}
\end{table}

In our game, there is only one NE which is achieved through $(C,C)$. It is worth noting that a game may have more than one Nash Equilibria depending upon the nature of the game. In our game, the best strategy for OBU is to choose \textquoteleft$C$\textquoteright ~no matter what charging plate chooses between \textquoteleft$C$\textquoteright  and \textquoteleft$D$\textquoteright. This is because the only way for OBU to maximize its payoff in the form of battery charge and fair billing is to choose \textquoteleft$C$\textquoteright  at the expense of the cost incurred by changing pseudonym and shared authentication overhead. OBU may not know the strategy of the charging plate. The \textquoteleft$D$\textquoteright ~strategy will cause the loss which is unfair auditability leading to revocation for both CSPA and the OBU. Therefore from the strategic Table 2, $(1,1)$ is the best response from both side, where they cannot increase their payoff by changing their strategy.

\begin{figure*}[t!]
\begin{center}
        \includegraphics[totalheight=7cm]{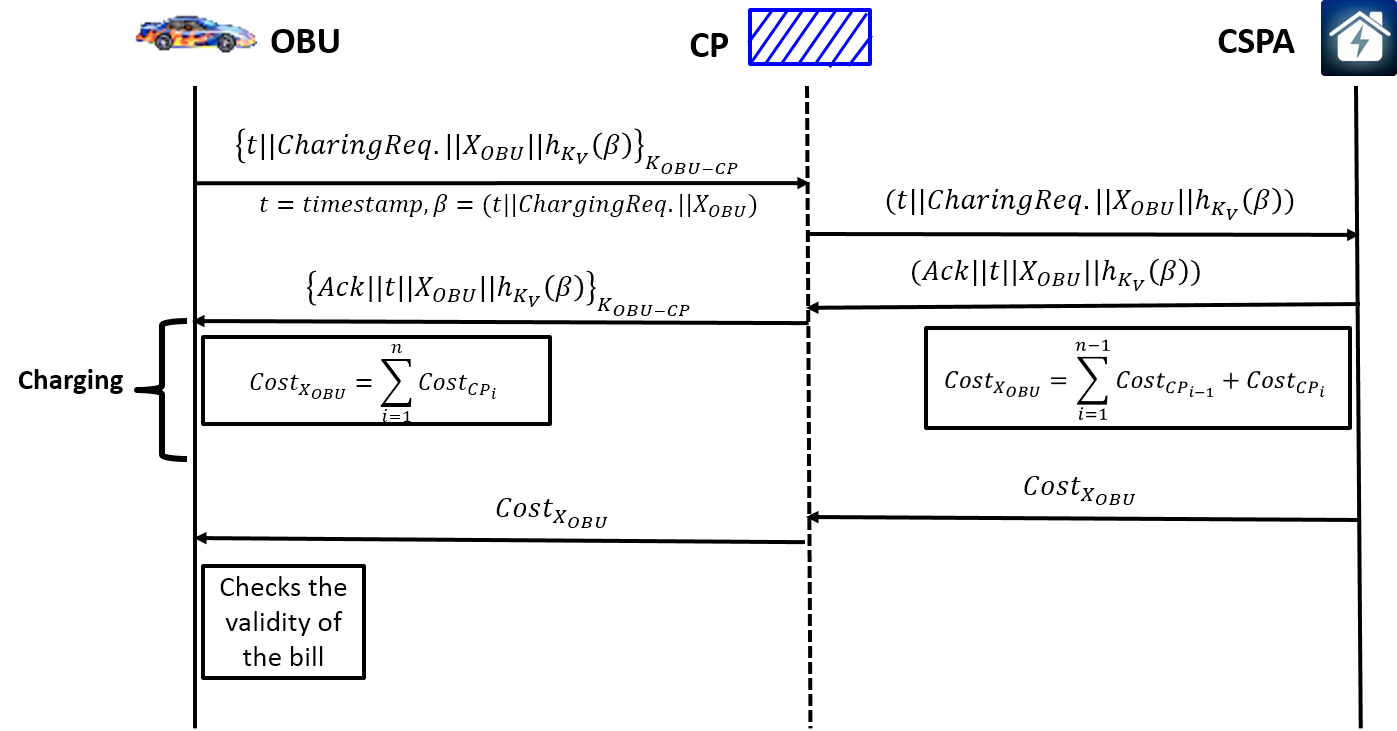}
    \caption{Power transfer and billing process in PHA scenario}
    \label{fig:ptinpha}
		\end{center}
\end{figure*}

\section{Evaluation and Analysis}

\subsection{Security and Privacy Analysis}
The security of our proposed scheme depends upon the collision resistance property of the one-way hash function. Given any $m$, it is easy to compute $h(m)$, and computationally very difficult to calculate the value of $m$ from $h(m)$. The most basic security requirement of our proposed scheme is mutual authentication between charging plate and OBU. With our proposed lightweight authentication protocol which is an extended version of Chuang et al.'s \cite{Chuang2011} protocol, mutual authentication is guaranteed before starting the charging process. If the underlying hash mechanism is secure, then our proposed authentication can be considered secure. However the effect of keys compromise can be critical for our proposed scheme. From the OBU perspective, compromising $K_{V}$ does not have dire consequences because the adversary $\mathcal{A}$ can get only a part of pseudonym, not the whole pseudonym. In case of the compromising both $K_{sym}$ and $K_{V}$, $\mathcal{A}$ can not only manipulate pseudonyms, but can reuse them. Moreover our system can prevent the secret sharing attack because a vehicle must authenticate itself prior to charging its battery. Since the charging plate authenticates the user, therefore two users cannot use the same secret and/or re-use it because timestamp and the local log of the usage of charging plate or CSPA will stop the vehicles from doing so. In other words, the protocol must follow the steps: (i) registration, (ii) authentication and (iii) power transfer.

Secure bidirectional auditability is provided through semi-simultaneous billing procedure incorporated in our proposed scheme. Each charging plate is capable to transfer a fixed amount of electrical energy to the battery and bill the vehicle with a fixed amount. Since vehicles use either individual pseudonyms $PS{_{OBU}^i}$, or the combined hash value $X_{OBU}$, the final bill is the combination of the costs of individual charging plates. Each OBU knows the cost per charging plate, and it records the cost in its log as well. Both CSPA and OBU can verify the individual and final bill of the power transfer in a liable and a non-repudiate manner. It is also worth noting that it is the duty of CSPA to make sure the freshness of the session key and use different session keys in different charging plates for security reasons. The behavior of both charging plate and OBU from security perspective is depicted by the game $\mathcal{G}$ where they establish NE during charging transaction.

Our proposed scheme also preserves conditional privacy of the users during electric power transferring and billing process. We do not use any real identity that could lead to the actual user, instead we use a series of legitimate pseudonyms.

Moreover in order to measure the privacy and the anonymity of the vehicles, we calculate the entropy of the user denoted by  $\mathcal{H}$. The anonymity set needed entropy calculation is, the set of active vehicles at the certain time $t$ that are in the process of charging their batteries. Let anonymity set is $U$ and let $p_{U_{i}}$ is the probability that the node $U_{i}$ be the target vehicle whose anonymity is being calculated or $U_{i}$ is under surveillance by adversary $\mathcal{A}$, where $\forall U_{i} \in U$, $\sum_{i=1}^{|U|}p_{U_{i}}=1$.
The entropy $\mathcal{H}$ of the target user $U_i$ in the anonymity set $U$ is given by:
$\mathcal{H}=-\sum_{i=1}^{|U|}{p_{U_{i}}}\times \log_{2}{p_{U_{i}}}$. Since our anonymity set is $U$, the possible outcomes can be $|U|$ assuming the fair distribution and the probability of each outcome will be  $\frac{1}{|U|}$. If the distribution is normal and the occurrence of the nodes to be related to the pseudonyms in question is equally likely, then the maximum entropy is also given by the following formula: $\mathcal{H}_{max}=-\sum_{i=1}^{|U|}p_{U_{i}}\times \log_{2}p_{U_{i}}=\log_{2}p_{U_{i}}$.

\begin{thm}
\textit{In case of any dispute, the node in question can be revoked and the pseudonym in question is linkable to the actual user}
\end{thm}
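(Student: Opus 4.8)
The plan is to exhibit an explicit recovery procedure and argue its correctness, since the statement is a possibility claim: whenever a dispute arises there is a legal path that recovers the real identity and thereby supports revocation. I would split the argument to match the two clauses of the statement, namely (i) \emph{linkability}, that the disputed pseudonym $PS_{OBU}^{i}$ determines the real $ID$, and (ii) \emph{revocability}, that knowledge of $ID$ together with the stored $X_{OBU}$ record lets the authorities invalidate the offending node.

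First I would handle the trigger. By assumption the real identity is exposed only under legal need, so the proof begins by positing a warrant, which authorizes the $j$ revocation authorities to run the threshold reconstruction of \cite{Zhang2008}, combining their shares $x_{i}$ to recover the master private key $x$ with $PK^{+}=xP$. Under the threshold assumption this reconstruction succeeds exactly when the required number of honest $RA_{i}$ cooperate, so no single authority can do it unilaterally, which is precisely what preserves conditional privacy in the non-dispute case.

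Next comes the cryptographic core. The RAs hold, indexed against the vehicle's $X_{OBU}$, the ElGamal ciphertext $\delta_{1}=rP$, $\delta_{2}=(K_{sym}\|K_{V})\oplus H(rPK^{+})$. Using the reconstructed $x$ I would compute $x\delta_{1}=x(rP)=r(xP)=rPK^{+}$, hence $H(x\delta_{1})=H(rPK^{+})$ and $\delta_{2}\oplus H(x\delta_{1})=K_{sym}\|K_{V}$; this is just correctness of ElGamal-over-ECC decryption and recovers both trapdoor keys. I would then verify the DMV signature on $PS_{OBU}^{i}$ with $K_{DMV}^{+}$ and parse the three fields $(\alpha)_{K_{sym}}$, $(\alpha\oplus ID)_{K_{V}}$, $n_{i}$. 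Decrypting the first with $K_{sym}$ yields $\alpha$, decrypting the second with $K_{V}$ yields $\alpha\oplus ID$, and a single XOR gives $ID=\alpha\oplus(\alpha\oplus ID)$, which closes clause (i). For revocability I would note that $X_{OBU}=h(PS_{OBU}^{1},\ldots,PS_{OBU}^{n})$ is on record at DMV and CSPA, so once $ID$ is known the whole pseudonym pool of the node is identified and can be blacklisted; moreover, since every pseudonym of the node shares the same $K_{sym},K_{V}$ and an arithmetic $\alpha=c_{V}+n_{i}\cdot inc_{V}$, recovering these keys links \emph{all} of the node's pseudonyms, not merely the disputed one.

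The main obstacle I expect is the indexing and association step rather than the algebra. In DMA the CSPA observes $PS_{OBU}^{i}$ alongside $X_{OBU}$, so the disputed pseudonym is directly tied to the stored ciphertext record; but in PHA the CSPA only ever sees $X_{OBU}$ and hash-chain images $h^{n-1}(PS_{OBU}^{i})$, never a pseudonym in the clear. I would therefore argue that DMV's pseudonym database, stored per vehicle and keyed by $X_{OBU}$, lets one first resolve $X_{OBU}$ to the candidate pool and then match the presented chain image by recomputing $h(\cdot)$ over each $PS_{OBU}^{j}$ in that pool to pin down the exact $PS_{OBU}^{i}$, before applying the decryption above. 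Establishing this indirection cleanly, and confirming that the binding from $X_{OBU}$ to $(\delta_{1},\delta_{2})$ at the RAs is collision-free so the recovered keys really belong to the disputed node, is where the care is required; the ElGamal and XOR cancellations themselves are routine.
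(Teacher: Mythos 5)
Your proof takes essentially the same route as the paper's: a warrant-authorized threshold reconstruction of $x$ from the shares $x_i$, ElGamal decryption of $(\delta_1,\delta_2)$ to recover $K_{sym}\|K_V$, and then decryption of the pseudonym fields to expose $ID$. Your write-up is in fact more careful than the paper's own proof, which mislabels the decrypted plaintext as $PS_{OBU}^i$ (it should be $K_{sym}\|K_V$), writes the cancelling hash term as $H(rxPK^{+})$ rather than $H(x\delta_1)=H(rxP)=H(rPK^{+})$, and omits that recovering $ID$ also requires decrypting $(\alpha\oplus ID)_{K_V}$ and XORing with $\alpha$ -- all points you handle correctly, along with the $X_{OBU}$-to-pseudonym indexing issue in PHA that the paper does not discuss.
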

\begin{proof}
In order to proceed with revocation, RAs get the warrant from the authorities and then look into the $n$ values of the message in question that are provided to RAs in order to figure out which pseudonym was used. After that, RAs collude and construct $x$ from  individual $x_{i}$ related to  the pseudonym in question and the session leader decrypts the keys from  cipher text $c=\{\delta_{1},\delta_{2}\}$ as follows: $PS{_{OBU}^i}=\delta_2\oplus H(x\delta_{1})=(K_{sym}\|K_{V})\oplus H(rPK^{+})\oplus H(rxPK^{+}).$
 When RAs decrypt the keys $K_{sym}$ and $K_{V}$, then revocation is almost done, all RAs have to do is  to decrypt the $(\alpha)_{K_{sym}}$ and then extract ID of the vehicle from the pseudonym.
\end{proof}

\begin{lem}
\textit{It is hard to impersonate other OBU in the process of online power transfer. In other words, it is hard to get away with billing procedures.}
\end{lem}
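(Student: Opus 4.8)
The plan is to show that an adversary $\mathcal{A}$ cannot successfully impersonate a legitimate OBU in either the DMA or the PHA authentication flow, and therefore cannot have another vehicle's charge billed to its account (or avoid being billed for its own consumption). I would argue by contradiction in both cases: assume $\mathcal{A}$ completes an authentication session while presenting credentials belonging to a victim OBU, and then derive that $\mathcal{A}$ must have inverted the collision-resistant hash $h(\cdot)$ or recovered one of the secret keys $K_V,K_{sym}$, contradicting the collision-resistance assumption established at the start of Section~V and the secrecy of the TRM-protected keys.

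First, for the DMA case, I would trace exactly which quantities $\mathcal{A}$ must produce to pass the four checks performed by the CP in step~(c). To forge a valid triple $(c_1,c_2,c_3)$ against a target pseudonym $PS_{OBU}^i$, the adversary must be able to compute $c_3=h(h(PS_{OBU}^i)\|c_2\|H_3)$ consistently with the $X_{OBU}$ recovered from $c_2$; since $H_2=h^2(s\|X_{OBU})$ and $H_1$ are derived from the CSPA secret $s$ and bound to $X_{OBU}$, I would show that producing a consistent set forces either knowledge of $H_2$ (hence of $s$, which only CSPA and the TRM-registered OBU hold) or a second preimage of $h$. The key observation is that $PS_{OBU}^i$ itself is a DMV-signed object under $K_{DMV}^-$ and that the billing log records the pair $(X_{OBU},PS_{OBU}^i)$, so even a replayed pseudonym is tied to the genuine owner rather than to $\mathcal{A}$; I would use this to show that a successful ``impersonation'' never redirects the bill away from the true consumer.

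Second, for the PHA case, the argument is cleaner: authentication succeeds only if the value $h^{n-1}(PS_{OBU}^i)$ presented to the CP satisfies $h(h^{n-1}(PS_{OBU}^i))=h^n(PS_{OBU}^i)$, where $h^n(PS_{OBU}^i)$ was registered with CSPA over the secure channel together with $Cert_{OBU}$. I would note that CSPA replaces the stored head after each use, so a captured hash-chain member is spent and cannot be replayed, and that predicting the next unused preimage $h^{n-1}$ from the published head $h^n$ is exactly a hash inversion. Because the chain is bound to $Cert_{OBU}$ at registration, $\mathcal{A}$ cannot register a victim's chain under its own certificate either.

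The main obstacle I expect is the DMA direction, specifically closing the gap between ``cannot forge the tuple'' and ``cannot get away with billing.'' Collision-resistance gives unforgeability of the authentication transcript, but billing integrity additionally requires that a captured-and-replayed legitimate transcript be either rejected or charged to the rightful owner. I would have to lean on the freshness mechanisms already asserted in the paper, namely the per-session nonce $r_c$ (equivalently $r_{cp}$) inside the session key $SK_{OBU-CP}=h(PS_{OBU}^i\|r_c)$, the timestamps in the charging request, and the CP/CSPA usage logs that detect reuse of a pseudonym; tying these together rigorously, rather than merely invoking hash inversion, is the delicate part of the proof.
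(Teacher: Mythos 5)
Your core DMA argument is essentially the paper's own proof: the paper argues that $H_{2}$ is bound to $X_{OBU}$ at registration, that an adversary without the secret $s$ cannot produce a consistent $(c_{1},c_{2},c_{3})$, and that collision resistance of $h(\cdot)$ makes guessing the right values without $X_{OBU}$ improbable --- exactly the chain of reasoning you lay out for the DMA case. Where you diverge is in scope: the paper's proof stops at DMA tuple unforgeability and says nothing about PHA, whereas you explicitly treat the hash-chain case ($h(h^{n-1}(PS_{OBU}^{i}))=h^{n}(PS_{OBU}^{i})$, head replacement after each use, binding to $Cert_{OBU}$), which is a genuine and worthwhile addition since the lemma as stated covers ``the process of online power transfer'' generally. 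You also correctly flag the gap between ``cannot forge the transcript'' and ``cannot get away with billing'' via replay of a legitimate transcript; the paper does not close that gap inside this proof but instead defers it to the corollary immediately following the lemma (which argues that CSPA's timestamped logs detect reuse of a message or pseudonym). So your proposal is sound and strictly more complete than the paper's proof; the one caution is that your honest admission that tying the nonces, timestamps, and logs together ``rigorously'' is delicate is well placed --- neither you nor the paper actually discharges that step formally, and in the paper it is only handled at the informal level of the corollary.
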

\begin{proof}
Before starting the power transfer procedure, the vehicles have to register with CSPA in both direct and hash chain-based authentication and provide CSPA with $X_{OBU}$. And at the authentication stage, OBU has to provide the charging plate with $c_{1}$ and $c_{3}$ that contain $H_{2}$ and $H_{3}$ respectively. At the registration phase, $H_{2}$ is associated to the $X_{OBU}$ of the current authenticating vehicle. Therefore any adversary $\mathcal{A}$ with $H_{2}$' without knowing the secret $s$, it will be hard to calculate valid $c_{1}$, $c_{2}$, and $c_{3}$ at the authentication phase.
Therefore the values sent to the CP for authentication will be $c_{1}$', $c_{2}$', $c_{3}$', and $H_{3}$' all of which must have association with the $X_{OBU}$ of the pseudonym $PS{_{OBU}^i}$. Arguing on the collision resistance of the hash function used, it can be inferred that the probability of calculating the right values with not knowing the $X_{OBU}$ is small, therefore it is hard for anybody to impersonate other OBU with a pseudonym.
\end{proof}
The following corollary naturally follows:
\begin{cor}
\textit{Replaying the power transfer request message and/or Pseudonym will not benefit the malicious intent of the user.}

The argument is divided into two parts. Replaying a message will result in the existence of previous power transfer records with this information. Upon successful power transfer, the CSPA maintains a log with timestamp and billing information against the used pseudonym. Let an OBU charges its battery at $CP_{x}$ at particular time $t_{i}$ after successful authentication, the log is recorded at CSPA with the used pseudonym and other credentials. At $t_{i+j}$, the OBU again uses the message, then there are two possibilities. First, the OBU must have already been authenticated before sending this message; in that case, it will receive the power accordingly, secondly if it is not authenticated, then CSPA must have figured out that the record already existed and that the OBU was malicious. In either case, the OBU cannot benefit from such behavior. The same argument stands for the pseudonym as well.
\end{cor}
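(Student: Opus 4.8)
The plan is to derive this corollary directly from the impersonation hardness of the preceding Lemma, together with the uniqueness that timestamps and the CSPA's per-transaction log impose on every accepted request. First I would make precise what ``benefit'' means here: a malicious user gains an advantage only if it can either obtain power that is billed to a different account, or obtain power without incurring a corresponding entry in the CSPA log against its own $X_{OBU}$ and pseudonym. Anything else amounts merely to paying for what one actually receives, which is no malicious advantage at all. Fixing this notion is what lets the two informal possibilities in the statement be treated as an exhaustive case split.

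With that notion in place, I would carry out the two-case analysis the statement suggests. \textbf{Case 1 (replaying a request message).} Each power-transfer request in both the DMA and PHA protocols embeds a fresh timestamp and is bound to the session key $SK_{OBU-CP}$ established during that particular authentication. A replay aimed at the same plate carries a stale timestamp that coincides with an existing $(\text{timestamp}, PS{_{OBU}^i}, X_{OBU})$ record already written to the CSPA log, so the CSPA detects the duplicate and rejects it; a replay redirected to a different plate fails because the message is encrypted and authenticated under a session key tied to the original authentication, and reconstructing valid $c_{1}, c_{2}, c_{3}$ for a fresh session would require knowledge of $H_{2}$ and ultimately the secret $s$, which the Lemma rules out. \textbf{Case 2 (replaying a pseudonym).} Since the protocol enforces the order registration $\rightarrow$ authentication $\rightarrow$ power transfer, a captured pseudonym alone is useless: the adversary $\mathcal{A}$ must still complete mutual authentication, and by the Lemma it cannot do so without the $X_{OBU}$-bound material. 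Moreover, pseudonyms are changed at every charging phase and logged, so a reused pseudonym collides with a stored record and is flagged.

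The main obstacle is the borderline situation in which the replaying party is itself a \emph{legitimately authenticated} OBU resubmitting its own earlier request, which is exactly the first sub-possibility the statement raises. Here I must show that the apparent benefit is vacuous rather than absent: if the OBU is already authenticated, it will indeed receive power, but the CSPA bills that power against the same $X_{OBU}$ and pseudonym under a new timestamp, so no unattributed or mis-attributed charge arises; and if the OBU is not re-authenticated, the stale-timestamp and duplicate-record detection of Case~1 applies and the request is refused. Discharging this case cleanly rests entirely on the twin assumptions that timestamps are unique per accepted transaction and that the CSPA log is consulted before any billing entry is committed. Once those are invoked, the pseudonym sub-case follows by the identical argument, and the corollary is complete.
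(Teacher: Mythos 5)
Your proposal is correct and follows essentially the same route as the paper: the identical two-case split (the replaying party is either already authenticated, in which case it receives power but is billed against the same $X_{OBU}$ and pseudonym anyway, or it is not, in which case the CSPA's timestamped log exposes the duplicate record), with the pseudonym case handled by the same argument. Your version merely tightens the paper's informal reasoning by pinning down what ``benefit'' means and by explicitly invoking the preceding Lemma for the fresh-authentication sub-case, which the paper only implies by placing the corollary as a consequence of that Lemma.
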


\subsection{Computation and Communication Overhead}
In this subsection we consider the computation and communication overhead incurred by the OBU and CP in the process of mutual authentication and power transfer. In the computation overhead, we consider the authentication cost incurred by OBU and CP denoted by $T_{auth-OBU}$ and $T_{auth-CP}$ respectively and the cost of revocation denoted by $T_{rev}$ in the direct authentication method. When OBU mutually authenticates with CP, it performs $3H+2EO$ operations, where $H$ denotes the {\bf hash operation} and $EO$ denotes the {\bf exclusive OR} operation. CP performs $6H+5EO$ operations. The cost of revocation $T_{rev}$ in our proposed scheme is given by:\newline
$$T_{rev}=2T_{\gamma}+2T_{mul}+2T_{H}+2T_{dec}$$
$T_{\gamma}$ is the time incurred by the pseudonym search table, $T_{mul}$ is the time required for point multiplication, $T_H$ is the time required to calculate hash, and $T_{dec}$ is the time required for symmetric decryption. In \cite{Zhang2008}, $T_{mul}$ is found for a supersingular curve with embedding $k=6$ over $\mathbb{F}_{3^{97}}$ to be equal to $0.78 ~ms$. Hence the above equations can be written as:
$$
T_{rev}~~=1.56+2(T_{\gamma}+T_{H}+T_{dec})
$$

We also discuss the authentication processing time by both OBU and CP. According to \cite{Chuang2011}, $SHA2$ hash operation takes $0.76 ~\mu sec$. Therefore OBU takes about $2.28 ~\mu sec$ and CP takes about $4.56 ~\mu sec$. It is worth noting that since the XOR operation time is usually a single clock on CPUs which is infinitesimally small, therefore we ignore it.
In case of the hash chain-based authentication, OBU cost is only $1D$, where $D$ denotes the decryption operation. CSPA incurs $1H+1E$, $E$ is the encryption operation.

 We also calculate the authentication overhead. In case of DMA, the communication overhead is equal to $71+u$, where $u$ is the size of the pseudonym. We assume $SHA-512$ as a hash function and consider the \cite{Hari2007} implementation of timestamp which is $6~bytes$. Similarly, in case of PHA the communication overhead is fixed and incurs $135~bytes$ where timestamp is $6~bytes$, $ChargingReq$. is $1~byte$, $X_{OBU}$ is $64~bytes$ and $h_{K_{V}}(\beta)$ is also $64 ~bytes$.

\subsection{Length of Charging Plate}

The efficiency of the power transfer and the auditability depends upon the length of the charging plate. Therefore, the length of the charging plate must be a tradeoff between the authentication and billing delay and the time required to acquire the specified amount of electrical power from the charging plate. The plate should not be too short where a vehicle cannot receive the promised amount of power after spending most of the time on the authentication and billing. Similarly the plate should not be too long, where mutual auditability is on stake and semi-simultaneous audit is not possible. To date, the size of the segment is not fixed; however, OLEV project considers the length of the segment to be $5~m$ which is still controversial because of the authentication and billing overhead incurred by the number of segments. Another important point is that the length of the segment is a design feature where the amount of electrical power, authentication and billing delay, and the time to acquire the guaranteed charge should be taken into account. In our scenario, we argue that the authentication delay incurred by both OBU and charging plate is less than a microsecond (optimistically) due to the design of authentication scheme. Therefore the uniformity of the pickup devices installed in the vehicles, and the capacity of delivering electrical power by the charging plates will play a vital role in deciding the length of the plate.

\subsection{Discussion}
In this subsection we analyze the effect of the two authentication strategies on the efficiency and the design parameters. In DMA, OBU and CP have to perform relatively more operations as compared to PHA; nevertheless the time consumed by these operations (hash and XOR) is less than encryption operation. That is why we argue that in performance, DMA will outperform PHA. Secondly, in DMA, both parties are involved in setting up the session key with mutual agreement. The communication cost is minimum since OBU and charging plate are communicating directly. Therefore the only parameter that could affect the performance of DMA and PHA, is the length of the charging plate. If we consider the normal speed of the vehicle, then PHA will favor the lengthier charging plate than DMA, because of the communication delay incurred by the PHA. On the other hand, PHA does not cost any computation delay because the processing is carried out at resource rich CSPA and charging plate is only used as intermediary. However, in case of PHA, the session key is constructed by one entity, CSPA. Moreover the OBUs must save the hash chain of the currently used pseudonyms in the on-board storage thereby incurring storage cost. Therefore we can argue that, these two methods can be used in different circumstances that fit the necessary conditions for direct and hash-based authentication. For normal scenarios, DMA will be the fair choice because of its security, auditability guarantee, and robustness.

\section{Conclusion}
In this paper, we proposed a secure, privacy-aware, and bidirectional auditable mechanism for wireless power transfer in online electric vehicles. The power transfer technology is installed under the road in the form of charging plates where a segment of the road constitute a charging plate containing a hardware module for communication and lightweight computation. In our proposed scheme, the vehicles use multiple pseudonymous strategy to mutually authenticate with the charging plate and then expedite the power transfer. Meanwhile electric power service provider bills the vehicles on per charging palate basis. Our proposed scheme provides secure and privacy-aware bidirectional auditability where the billing process is verifiable by both parties. Moreover we also present the game theoretic approach to validate the bidirectional auditability.


\section*{Acknowledgment}
This work was supported in part by the NRF (National Research Foundation of Korea) grant funded by the Korea government MEST (Ministry of Education, Science and Technology) (No. NRF-2012R1A2A2A01046986). This research was also supported in part by the MSIP (Ministry of Science, ICT and Future Planning), Korea, under the ITRC (Information Technology Research Center) support program (NIPA-2014-H0301-14-1044) and (NIPA-2014-H0301-14-1015) supervised by the NIPA (National IT Industry Promotion Agency). This work was supported in part by US National Science Foundation (NSF) CREST No. HRD-1345219.




%
\bibliographystyle{IEEETran}
\bibliography{infocom2015}

\begin{thebibliography}{10}
\providecommand{\url}[1]{#1}
\csname url@samestyle\endcsname
\providecommand{\newblock}{\relax}
\providecommand{\bibinfo}[2]{#2}
\providecommand{\BIBentrySTDinterwordspacing}{\spaceskip=0pt\relax}
\providecommand{\BIBentryALTinterwordstretchfactor}{4}
\providecommand{\BIBentryALTinterwordspacing}{\spaceskip=\fontdimen2\font plus
\BIBentryALTinterwordstretchfactor\fontdimen3\font minus
  \fontdimen4\font\relax}
\providecommand{\BIBforeignlanguage}[2]{{%
\expandafter\ifx\csname l@#1\endcsname\relax
\typeout{** WARNING: IEEEtran.bst: No hyphenation pattern has been}%
\typeout{** loaded for the language `#1'. Using the pattern for}%
\typeout{** the default language instead.}%
\else
\language=\csname l@#1\endcsname
\fi
#2}}
\providecommand{\BIBdecl}{\relax}
\BIBdecl

\bibitem{Romm2006}
J.~Romm, ``The car and fuel of the future,'' \emph{Energy Policy}, vol.~34,
  no.~17, pp. 2609 -- 2614, 2006.

\bibitem{Li2014}
S.~Li and C.~Mi, ``Wireless power transfer for electric vehicle applications,''
  \emph{Emerging and Selected Topics in Power Electronics, IEEE Journal of},
  vol.~PP, no.~99, pp. 1--1, 2014.

\bibitem{Timpner2014}
J.~Timpner and L.~Wolf, ``Design and evaluation of charging station scheduling
  strategies for electric vehicles,'' \emph{Intelligent Transportation Systems,
  IEEE Transactions on}, vol.~15, no.~2, pp. 579--588, April 2014.

\bibitem{Li2012}
G.~Li and X.-P. Zhang, ``Modeling of plug-in hybrid electric vehicle charging
  demand in probabilistic power flow calculations,'' \emph{Smart Grid, IEEE
  Transactions on}, vol.~3, no.~1, pp. 492--499, March 2012.

\bibitem{Musavi2012}
F.~Musavi, M.~Edington, and W.~Eberle, ``Wireless power transfer: A survey of
  ev battery charging technologies,'' in \emph{Energy Conversion Congress and
  Exposition (ECCE), 2012 IEEE}, Sept 2012, pp. 1804--1810.

\bibitem{HZhu2011}
H.~Zhu, Y.~Zhao, S.~Ding, and B.~Jin, ``An improved forward-secure anonymous
  rfid authentication protocol,'' in \emph{Wireless Communications, Networking
  and Mobile Computing (WiCOM), 2011 7th International Conference on}, Sept
  2011, pp. 1--5.

\bibitem{Yalcin2010}
K.~Dietrich, ``\BIBforeignlanguage{English}{Anonymous rfid authentication using
  trusted computing technologies},'' in
  \emph{\BIBforeignlanguage{English}{Radio Frequency Identification: Security
  and Privacy Issues}}, ser. Lecture Notes in Computer Science, S.~Ors~Yalcin,
  Ed.\hskip 1em plus 0.5em minus 0.4em\relax Springer Berlin Heidelberg, 2010,
  vol. 6370, pp. 91--102.

\bibitem{WXie2013}
W.~Xie, L.~Xie, C.~Zhang, Q.~Zhang, and C.~Tang, ``Cloud-based rfid
  authentication,'' in \emph{RFID (RFID), 2013 IEEE International Conference
  on}, April 2013, pp. 168--175.

\bibitem{Arco2011}
P.~D'Arco and A.~De~Santis, ``On ultralightweight rfid authentication
  protocols,'' \emph{Dependable and Secure Computing, IEEE Transactions on},
  vol.~8, no.~4, pp. 548--563, July 2011.

\bibitem{Weissenger2010}
C.~Weissinger, D.~Buecherl, and H.~Herzog, ``Conceptual design of a pure
  electric vehicle,'' in \emph{Vehicle Power and Propulsion Conference (VPPC),
  2010 IEEE}, Sept 2010, pp. 1--5.

\bibitem{Hoke2011}
A.~Hoke, A.~Brissette, D.~Maksimovic, A.~Pratt, and K.~Smith, ``Electric
  vehicle charge optimization including effects of lithium-ion battery
  degradation,'' in \emph{Vehicle Power and Propulsion Conference (VPPC), 2011
  IEEE}, Sept 2011, pp. 1--8.

\bibitem{Jang2012}
Y.~J. Jang, Y.~D. Ko, and S.~Jeong, ``Optimal design of the wireless charging
  electric vehicle,'' in \emph{Electric Vehicle Conference (IEVC), 2012 IEEE
  International}, March 2012, pp. 1--5.

\bibitem{Ko2013}
Y.~D. Ko and Y.~J. Jang, ``The optimal system design of the online electric
  vehicle utilizing wireless power transmission technology,'' \emph{Intelligent
  Transportation Systems, IEEE Transactions on}, vol.~14, no.~3, pp.
  1255--1265, Sept 2013.

\bibitem{Suh2013}
I.-S. Suh and J.~Kim, ``Electric vehicle on-road dynamic charging system with
  wireless power transfer technology,'' in \emph{Electric Machines Drives
  Conference (IEMDC), 2013 IEEE International}, May 2013, pp. 234--240.

\bibitem{Chuang2011}
M.-C. Chuang and J.-F. Lee, ``Team: Trust-extended authentication mechanism for
  vehicular ad hoc networks,'' in \emph{Consumer Electronics, Communications
  and Networks (CECNet), 2011 International Conference on}, April 2011, pp.
  1758--1761.

\bibitem{YaoTMC2014}
L.-Y. Yeh and J.-L. Huang, ``Pbs: A portable billing scheme with fine-grained
  access control for service-oriented vehicular networks,'' \emph{Mobile
  Computing, IEEE Transactions on}, vol.~13, no.~11, pp. 2606--2619, Nov 2014.

\bibitem{HZhu2008}
H.~Zhu, X.~Lin, R.~Lu, P.-H. Ho, and X.~Shen, ``Slab: A secure localized
  authentication and billing scheme for wireless mesh networks,''
  \emph{Wireless Communications, IEEE Transactions on}, vol.~7, no.~10, pp.
  3858--3868, October 2008.

\bibitem{Lee2010}
H.-Y. Lee and Y.-B. Lin, ``Credit pre-reservation mechanism for umts prepaid
  service,'' \emph{Wireless Communications, IEEE Transactions on}, vol.~9,
  no.~6, pp. 1867--1873, June 2010.

\bibitem{YaoTIS2014}
L.-Y. Yeh and Y.-C. Lin, ``A proxy-based authentication and billing scheme with
  incentive-aware multihop forwarding for vehicular networks,''
  \emph{Intelligent Transportation Systems, IEEE Transactions on}, vol.~15,
  no.~4, pp. 1607--1621, Aug 2014.

\bibitem{Zhang2008}
C.~Zhang, R.~Lu, X.~Lin, P.-H. Ho, and X.~Shen, ``An efficient identity-based
  batch verification scheme for vehicular sensor networks,'' in \emph{INFOCOM
  2008. The 27th Conference on Computer Communications. IEEE}.\hskip 1em plus
  0.5em minus 0.4em\relax IEEE, 2008, pp. 246--250.

\bibitem{Ma2008}
Z.~Ma, F.~Kargl, and M.~Weber, ``Pseudonym-on-demand: A new pseudonym refill
  strategy for vehicular communications,'' in \emph{Vehicular Technology
  Conference, 2008. VTC 2008-Fall. IEEE 68th}, Sept 2008, pp. 1--5.

\bibitem{Benin2010}
J.~Benin, M.~Nowatkowski, and H.~Owen, ``Unified pseudonym distribution in
  vanets,'' in \emph{Wireless and Mobile Computing, Networking and
  Communications (WiMob), 2010 IEEE 6th International Conference on}, Oct 2010,
  pp. 529--533.

\bibitem{Petit2012}
J.~Petit, C.~Bosch, M.~Feiri, and F.~Kargl, ``On the potential of puf for
  pseudonym generation in vehicular networks,'' in \emph{Vehicular Networking
  Conference (VNC), 2012 IEEE}, Nov 2012, pp. 94--100.

\bibitem{Mahmoud2014}
M.~Mahmoud, S.~Taha, J.~Misic, and X.~Shen, ``Lightweight privacy-preserving
  and secure communication protocol for hybrid ad hoc wireless networks,''
  \emph{Parallel and Distributed Systems, IEEE Transactions on}, vol.~25,
  no.~8, pp. 2077--2090, Aug 2014.

\bibitem{Hari2007}
J.~Harri, F.~Filali, and C.~Bonnet, ``Rethinking the overhead of
  geo-localization information for vehicular communications,'' in
  \emph{Vehicular Technology Conference, 2007. VTC-2007 Fall. 2007 IEEE 66th},
  Sept 2007, pp. 2111--2115.

\end{thebibliography}

\end{document}